\documentclass[a4]{article}

\usepackage{authblk}
\usepackage{amsmath}
\usepackage{amssymb}
\usepackage{amsthm}
\usepackage{a4wide}
\usepackage[mathscr]{eucal}
\usepackage{mathtools}
\usepackage{bm}
\usepackage{bbm}
\usepackage{enumitem}
\usepackage{hyperref}
\usepackage{booktabs}
\usepackage[linesnumbered,ruled]{algorithm2e}
\usepackage[nocompress]{cite}

\usepackage{tikz}
\usepackage{graphicx}
\usepackage{subcaption}

\newtheorem{theorem}{Theorem}

\newtheorem{corollary}[theorem]{Corollary}

\newtheorem{lemma}[theorem]{Lemma}


\newcommand{\prob}{\mathbb{P}}
\newcommand{\Prob}[1]{\prob\left(#1\right)}

\newcommand{\expec}{\mathbb{E}}
\newcommand{\Exp}[1]{\expec\left[#1\right]}






\def\t{\tau}

\def\P{\mathbb{P}}
\def\T{\Theta}

\newcommand\abs[1]{\left|#1\right|}
\newcommand{\me}{\textup{e}}

\allowdisplaybreaks


\setenumerate{label={\normalfont{(\roman*)}}} 
\SetLabelAlign{LeftAlignWithIndent}{\hspace*{1.0ex}\makebox[1.25em][l]{#1}}
\numberwithin{equation}{section}

\title{Finding induced subgraphs in scale-free inhomogeneous random graphs}
\author[1]{Ellen Cardinaels}
\author[2]{Johan S.H. van Leeuwaarden}
\author[3]{Clara Stegehuis}
\affil[1]{Eindhoven University of Technology}
\affil[2]{Tilburg University}
\affil[3]{Twente University}

\graphicspath{{Figures/}}

\begin{document}
		\maketitle
	\begin{abstract}
		We study the problem of finding a copy of a specific induced subgraph on inhomogeneous random graphs with infinite variance power-law degrees. 
We provide a fast algorithm that finds a copy of any connected graph $H$ on a fixed number of $k$ vertices as an induced subgraph in a random graph with $n$ vertices. By exploiting the scale-free graph structure, the algorithm runs in $O(n k)$ time for small values of $k$. As a corollary, this shows that the induced subgraph isomorphism problem can be solved in time $O(nk)$ for the inhomogeneous random graph. We test our algorithm on several real-world data sets.
	\end{abstract}

\section{Introduction}

The induced subgraph isomorphism problem asks whether a large graph $G$ contains a connected graph $H$ as an induced subgraph. When $k$ is allowed to grow with the graph size $n$, this problem is NP-hard in general. For example, $k$-clique and $k$ induced cycle, special cases of $H$, are known to be NP-hard~\cite{karp1972,garey2011}. For fixed $k$, this problem can be solved in polynomial time $O(n^k)$ by searching for $H$ on all possible combinations of $k$ vertices. Several randomized and non-randomized algorithms exist to improve upon this trivial way of finding $H$~\cite{williams2015,grochow2007,schreiber2005,omidi2009}.

A second problem we investigate is how to find a subgraph, when we know it exists, more efficiently than the trivial O(nk) algorithm. 

On real-world networks, many algorithms were observed to run much faster  than predicted by the worst-case running time of algorithms. This may be ascribed to some of the properties that many real-world networks share~\cite{brach2016}, such as the power-law degree distribution found in many networks~\cite{albert1999,faloutsos1999,jeong2000,vazquez2002}. One way of exploiting these power-law degree distributions is to design algorithms that work well on random graphs with power-law degree distributions.
For example, finding the largest clique in a network is NP-complete for general networks~\cite{karp1972}. However, in random graph models such as the Erd\H{o}s-R\'enyi random graph and the inhomogeneous random graph, their specific structures can be exploited to design {fixed parameter tractable} (FPT) algorithms that efficiently find a clique of size $k$~\cite{fountoulakis2015,friedrich2015a} or the largest independent set~\cite{heydari2017}.

In this paper, we study algorithms that are designed to efficiently find subgraphs in the inhomogeneous random graph, a random graph model that can generate graphs with a power-law degree distribution~\cite{boguna2003,park2004,bollobas2007,britton2006,norros2006,chung2002}. 
The inhomogeneous random graph has a densely connected core containing many cliques, consisting of vertices with degrees $\sqrt{n\log(n)}$ and larger. In this densely connected core, the probability of an edge being present is close to one, so that it contains many complete graphs~\cite{janson2010}. This observation was exploited in~\cite{friedrich2015} to efficiently determine whether a clique of size $k$ occurs as a subgraph in an inhomogeneous random graph.  
When searching for {\it induced} subgraphs however, some edges are required not to be present. Therefore, searching for induced subgraphs in the entire core is not efficient. We show that a connected subgraph $H$ can be found as an induced subgraph by scanning only vertices that are on the boundary of the core: vertices with degrees proportional to $\sqrt{n}$. 

We present an algorithm that first selects the set of vertices with degrees proportional to $\sqrt{n}$, and then randomly searches for $H$ as an induced subgraph on a subset of $k$ of those vertices. The first algorithm we present does not depend on the specific structure of $H$. For general sparse graphs, the best known algorithms to solve subgraph isomorphism on 3 or 4 vertices run in $O(n^{1.41})$ or $O(n^{1.51})$ time with high probability~\cite{williams2015}. For small values of $k$, our algorithm finds the desired subgraph on $k$ nodes in linear time with high probability on inhomogeneous random graphs. However, the graph size needs to be very large for our algorithm to perform well. We therefore present a second algorithm that again selects the vertices with degrees proportional to $\sqrt{n}$, and then searches for induced subgraph $H$ in a more efficient way. This algorithm has the same performance guarantee as our first algorithm, but performs much better in simulations. 

We test our algorithm on large inhomogeneous random graphs, where it indeed efficiently finds induced subgraphs. We also test our algorithm on real-world network data with power-law degrees. There our algorithm does not perform well, probably due to the fact that the densely connected core of some real-world networks may not be the vertices of degrees at least proportional to $\sqrt{n}$. We then show that a slight modification of our algorithm that looks for induced subgraphs on vertices of degrees proportional to $n^{\gamma}$ for some other value of $\gamma$ performs better on real-world networks, where the value of $\gamma$ depends on the specific network.

\paragraph{Notation.}
 We say that a sequence of events $(\mathcal{E}_n)_{n\geq 1}$ happens with high probability (w.h.p.) if $\lim_{n\to\infty}\Prob{\mathcal{E}_n}=1$. Furthermore, we write $f(n)=o(g(n))$ if $\lim_{n\to\infty}f(n)/g(n)=0$, and $f(n)=O(g(n))$ if $|f(n)|/g(n)$ is uniformly bounded, where $(g(n))_{n\geq 1}$ is nonnegative. Similarly, if $\limsup_{n\to\infty}\abs{f(n)}/g(n)>0$, we say that $f(n)=\Omega(g(n))$ for nonnegative $(g(n))_{n\geq 1}$. We write $f(n)=\Theta(g(n))$ if $f(n)=O(g(n) )$ as well as $f(n)=\Omega(g(n))$. 

\subsection{Model}
As a random graph null model, we use the inhomogeneous random graph or hidden variable model~\cite{boguna2003,park2004,bollobas2007,britton2006,norros2006,chung2002}. Every vertex is equipped with a weight $w$.
We assume that the weights are i.i.d.\ samples from the power-law distribution
\begin{equation}\label{eq:pl}
\Prob{w>k}=Ck^{1-\tau}
\end{equation}
for some constant $C$ and for $\tau\in(2,3)$. Two vertices with weights $w$ and $w'$ are connected with probability
\begin{equation}
p(w,w') = \min\left(\frac{w w'}{\mu n},1\right),
\end{equation}
where $\mu$ denotes the mean value of the power-law distribution~\eqref{eq:pl}. Choosing the connection probability in this way ensures that the expected degree of a vertex with weight $w$ is $w$. We denote the degree of vertex $i$ in the inhomogeneous random graph by $D_i$ and its weight by $w_i$.

\subsection{Algorithms}
We now describe two randomized algorithms that determine whether a connected graph $H$ is an induced subgraph in an inhomogeneous random graph and finds the location of such a subgraph if it exists. Algorithm~\ref{alg:motif} selects the vertices in the inhomogeneous random graph that are on the boundary of the core of the graph: vertices with degrees slightly below $\sqrt{\mu n}$. Then, the algorithm randomly divides these vertices into sets of $k$ vertices. If one of these sets contains $H$ as an induced subgraph, the algorithm terminates and returns the location of $H$. If this is not the case, then the algorithm fails. In the next section, we show that for $k$ small enough, the probability that the algorithm fails is small. This means that $H$ is present as an induced subgraph on vertices that are on the boundary of the core with high probability.

Algorithm~\ref{alg:motif} is similar to the algorithm in~\cite{friedrich2015a} designed to find cliques in random graphs. The major difference is that the algorithm to find cliques looks for cliques on all vertices with degrees larger than $\sqrt{f_1\mu n}$ for some function $f_1$. This algorithm is not efficient for detecting subgraphs other than cliques, since vertices with high degrees will be connected with probability close to one. 

\begin{algorithm}
	\caption{Finding induced subgraph $H$ (random search)}
	\label{alg:motif}
	\SetKwInOut{Input}{Input}\SetKwInOut{Output}{Output}
	\Input{$H=(V_H,E_H)$, $G=(V_G,E_G)$, $\mu$, $f_1=f_1(n)$, $f_2=f_2(n)$.}
	\Output{Location of $H$ in $G$ or fail.}
	Define $n=\abs{V}$, $I_n=[\sqrt{f_1 \mu n}, \sqrt{f_2\mu n}]$, set $k=|V_H|$ and $V'=\emptyset$.\\
	\For{$i\in V$}{
	\lIf{ $D_i\in I_n$}{ $V'=V'\cup i$}}
	Divide the vertices in $V'$ randomly into $\lfloor \abs{V'}/k\rfloor$ sets $S_1,\dots,S_{\lfloor\abs{V'}/k\rfloor}$.\\
	\For{$j=1,\dots,\lfloor\abs{V'}/k\rfloor$}{
	\lIf{ $H$ is an induced subgraph on $S_j$}{\Return location of $H$}}
\end{algorithm}

The following theorem gives a bound for the performance of Algorithm~\ref{alg:motif} for small values of $k$.
\begin{theorem}\label{thm:graphletfind}
	Choose $f_1=f_1(n)\geq 1/\log(n)$ and $f_2(n)$ such that for some $a<1, b<1$, $f_1<af_2$ and $f_2<b<1$ for all $n$. Let $k<\log^{1/3}(n)$. 
	Then, with high probability, Algorithm~\ref{alg:motif} detects induced subgraph $H$ on $k$ vertices in an inhomogeneous random graph with $n$ vertices and weights distributed as in~\eqref{eq:pl} in time $O(n k)$.
\end{theorem}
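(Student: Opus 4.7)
The plan is to prove Theorem~\ref{thm:graphletfind} by splitting the analysis into three pieces: a lower bound on $|V'|$; a lower bound on the probability $q$ that a uniformly random $k$-subset of $V'$ induces a copy of $H$; and a combination via conditional independence across the $\lfloor|V'|/k\rfloor$ disjoint subsets produced by the random partition. The running-time bound will follow by direct inspection of Algorithm~\ref{alg:motif}.

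First I would estimate $|V'|$. A vertex of weight $w=\Theta(\sqrt n)$ has degree concentrated around its mean $\sum_j p(w,w_j)\sim w$, with fluctuations of order $\sqrt{w}=n^{1/4}$, much smaller than the $\Theta(\sqrt n)$ width of $I_n$. A Chernoff bound therefore sandwiches the degree-defined set $V'$ between two weight-defined analogues. From~\eqref{eq:pl},
\[
\Prob{w\in I_n}=C(f_1\mu n)^{(1-\tau)/2}\bigl(1-(f_2/f_1)^{(1-\tau)/2}\bigr)=\Theta\bigl(f_1^{(1-\tau)/2}\,n^{(1-\tau)/2}\bigr),
\]
where the bracketed factor is bounded away from $0$ because $f_1<af_2$ with $a<1$. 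Standard binomial concentration then yields $|V'|=\Omega\bigl(f_1^{(1-\tau)/2}\,n^{(3-\tau)/2}\bigr)$ with high probability.

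Next I would lower-bound $q$. For any $u,v\in V'$ the weights lie (up to negligible fluctuation) in $I_n$, so $p(w_u,w_v)=w_uw_v/(\mu n)\in[f_1,b]$ with $b<1$. Conditional on the weights, edges are mutually independent, so for any fixed bijection of $S\subseteq V'$ onto $V_H$ the probability of realising every required edge and every required non-edge is at least $f_1^{|E_H|}(1-b)^{\binom{k}{2}-|E_H|}\ge\min(f_1,1-b)^{\binom{k}{2}}$. Since $f_1\ge 1/\log n$ and $1-b$ is a positive constant, the minimum is at least $1/\log n$ for $n$ large, yielding $q\ge(\log n)^{-\binom{k}{2}}$.

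To conclude, the $m=\lfloor|V'|/k\rfloor$ random subsets are vertex-disjoint, so conditional on the weights the induced-subgraph events depend on disjoint edge sets and are therefore independent. The failure probability is at most $(1-q)^m\le e^{-qm}$, and substituting the preceding bounds gives
\[
qm\ge\exp\!\Bigl(\tfrac{3-\tau}{2}\log n-\binom{k}{2}\log\log n-O(\log\log n)\Bigr).
\]
Since $k<\log^{1/3}(n)$ forces $\binom{k}{2}\log\log n=o(\log n)$ and $\tau<3$, the exponent tends to $+\infty$, so the algorithm succeeds w.h.p. The running time is $O(n)$ for the scan and partition, plus $O(|V'|/k)$ induced-subgraph checks each costing $O(k^2)$ edge queries, for a total of $O(nk)$. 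The hard part, I expect, will be the first step: the event $\{D_i\in I_n\}$ defining $V'$ depends on the very edges whose conditional independence drives step two. I plan to handle this by exploiting the gap between the degree-fluctuation scale $n^{1/4}$ and the much wider interval $I_n$, which allows the degree-based $V'$ to be replaced by a weight-based analogue up to an $o(|V'|)$ error; after this reduction the independence arguments used for $q$ and across disjoint $k$-subsets become clean.
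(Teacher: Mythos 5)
Your proposal follows the paper's proof essentially step for step: the weight/degree sandwich controlling $|V'|$ corresponds to the paper's Lemmas on degrees and weights and on polynomially many remaining nodes, your bound $p(w_u,w_v)\in[f_1,b]$ is the paper's $p_-\geq c/\log n$ and $p_+<1$, the failure estimate $(1-q)^{\lfloor |V'|/k\rfloor}\le e^{-qm}$ combined with $k<\log^{1/3}(n)$ is exactly the paper's computation, and the $O(nk)$ running-time count is identical. The conditioning subtlety you flag at the end is handled in the paper in precisely the way you propose --- sandwiching the degree-selected set between two weight-defined sets so that edge independence given the weights can be invoked --- so no additional idea is needed.
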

Thus, for constant values of $k$, Algorithm~\ref{alg:motif} finds an instance of $H$ in linear time. 

A problem with parameter $k$ is called fixed parameter tractable (FPT) if it can be solved in $f(k)n^{O(1)}$ time for some function $f(k)$, and it is called typical FPT (typFPT) if it can be solved in $f(k)n^{O(1)}$ with high probability~\cite{fountoulakis2014}. As a corollary of Theorem~\ref{thm:graphletfind} we obtain that the induced subgraph problem on the inhomogeneous random graph is in typFPT for any subgraph $H$, similarly to the $k$-clique problem on inhomogeneous random graphs~\cite{friedrich2015a}.
\begin{corollary}\label{cor:fpt}
	The induced subgraph problem on the inhomogeneous random graph is in typFPT. 
	\end{corollary}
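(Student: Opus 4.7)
The plan is to reduce Corollary~\ref{cor:fpt} to Theorem~\ref{thm:graphletfind} by a simple case split on the size of $k$ relative to $n$, and to bundle a brute-force fallback for the complementary regime.

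First I would define the combined algorithm. Given inputs $H$ on $k$ vertices and an inhomogeneous random graph $G$ on $n$ vertices: if $k < \log^{1/3}(n)$, run Algorithm~\ref{alg:motif} with, say, $f_1 = 1/\log(n)$ and $f_2 = 1/2$ (which satisfy the hypotheses of Theorem~\ref{thm:graphletfind}); otherwise, exhaustively enumerate all $\binom{n}{k}$ subsets of $V_G$ and for each test in time $O(k^2)$ whether the induced subgraph is isomorphic to $H$ by iterating over the $k!$ vertex bijections. In both regimes the algorithm outputs the location of $H$ if found and ``not present'' otherwise.

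For the running time, in the small-$k$ regime, Theorem~\ref{thm:graphletfind} gives the bound $O(n k)$ with high probability, which is at most $f(k)\, n^{O(1)}$ for any $f$ with $f(k) \geq k$. In the large-$k$ regime, we have $k \geq \log^{1/3}(n)$, equivalently $n \leq \exp(k^3)$. The brute-force check costs
\[
\binom{n}{k} \cdot k! \cdot O(k^2) \;\leq\; n^k \cdot k^{O(1)} \;\leq\; \exp\bigl(k \cdot \log n\bigr)\cdot k^{O(1)} \;\leq\; \exp\bigl(O(k^4)\bigr),
\]
which is a pure function of $k$. Setting $f(k) = \exp(Ck^4)$ for a sufficiently large constant $C$, the total running time is bounded by $f(k)\, n^{O(1)}$ with high probability, which is exactly the definition of typFPT from~\cite{fountoulakis2014}.

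There is essentially no main obstacle: the only substantive content sits inside Theorem~\ref{thm:graphletfind}, and the corollary is a purely definitional wrap-up. The only minor point to be careful about is the switchover threshold $k = \log^{1/3}(n)$; one must verify that the assumption of the theorem ($k < \log^{1/3}(n)$) is strict and that the complementary regime, which is invoked deterministically via brute force, indeed yields a bound depending only on $k$ once the inequality $n \leq \exp(k^3)$ is exploited. Because the brute-force step is deterministic and always correct, the ``with high probability'' qualifier in the typFPT definition is inherited from Theorem~\ref{thm:graphletfind} alone.
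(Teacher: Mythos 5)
Your proposal is correct and follows essentially the same route as the paper: invoke Theorem~\ref{thm:graphletfind} when $k<\log^{1/3}(n)$, and in the complementary regime use $n\leq \e^{k^3}$ to absorb a brute-force search into a function of $k$ alone (your explicit bound $\binom{n}{k}k!\,k^{O(1)}\leq \e^{O(k^4)}$ just makes precise what the paper states in one line). No gaps.
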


In theory Algorithm~\ref{alg:motif} detects any motif on $k$ vertices in linear time for small $k$. However, this only holds for large values of $n$, which can be understood as follows. In Lemma~\ref{lem:poly}, we show that $\abs{V'}=\Theta(n^{(3-\tau)/2})$, thus tending to infinity as $n$ grows large. However, when $n=10^7$ and $\tau=2.5$, this means that the size of the set $V'$ is only proportional to $10^{1.75}=56$ vertices. Therefore, the number of sets $S_j$ constructed in Algorithm~\ref{alg:motif} is also small. Even though the probability of finding motif $H$ in any such set is proportional to a constant, this constant may be small, so that for finite $n$ the algorithm almost always fails. Thus, for Algorithm~\ref{alg:motif} to work, $n$ needs to be large enough so that $n^{(3-\tau)/2}$ is large as well.

The algorithm can be significantly improved by changing the search for $H$ on vertices in set $V'$. In Algorithm~\ref{alg:motif2} we propose a search for motif $H$ similar to the Kashtan motif sampling algorithm~\cite{kashtan2004}. Rather than sampling $k$ vertices randomly, it samples one vertex randomly, and then randomly increases the set $S$ by adding vertices in its neighborhood. This already guarantees the vertices in list $S_j$ to be connected, making it more likely for them to form a specific connected motif together. In particular, we expand the list $S_j$ in such a way that the vertices in $S_j$ are guaranteed to form a spanning tree of $H$ as a subgraph. This is ensured by choosing the list $T^H$ that specifies at which vertex in $S_j$ we expand $S_j$ by adding a new vertex. For example, if $k=4$ and we set $T^H=[1,2,3]$ we first add an edge to the first vertex, then we look for a random neighbor of the previously added vertex, and then we add a random neighbor of the third added vertex. Thus, setting $T^H=[1,2,3]$ ensures that the set $S_j$ contains a path of length three, whereas setting $T^H=[1,1,1]$ ensures that the set $S_j$ contains a star-shaped subgraph. Depending on which subgraph $H$ we are looking for, we can define $T^H$ in such a way that we ensure that the set $S_j$ at least contains a spanning tree of motif $H$ in Step 6 of the algorithm. 

The selection on the degrees ensures that the degrees are sufficiently high so that probability of finding such a connected set on $k$ vertices is high, as well as that the degrees are sufficiently low to ensure that we do not only find complete graphs because of the densely connected core of the inhomogeneous random graph. 
The probability that Algorithm~\ref{alg:motif2} indeed finds the desired motif $H$ in any check is of constant order of magnitude, similar to Algorithm~\ref{alg:motif}. Therefore, the performance guarantee of both algorithms is similar. However, for several synthetic and real-world data sets, we show in Section~\ref{sec:exp} that Algorithm~\ref{alg:motif2} performs much better, since for finite $n$, $k$ connected vertices are more likely to form a motif than $k$ randomly chosen vertices.
 
\begin{algorithm}
	\caption{Finding induced subgraph $H$ (neighborhood search)}
	\label{alg:motif2}
	\SetKwInOut{Input}{Input}\SetKwInOut{Output}{Output}
	\Input{$H$, $G=(V,E)$, $\mu$, $f_1=f_1(n)$, $f_2=f_2(n)$, $s$.}
	\Output{Location of $H$ in $G$ or fail.}
	Define $n=\abs{V}$, $I_n=[\sqrt{f_1 \mu n},\sqrt{f_2\mu n}]$ and set $V'=\emptyset$.\\
	\For{$i\in V$}{
		\lIf{ $D_i\in I_n$}{ $V'=V'\cup i$}}
	Let $G'$ be the induced subgraph of $G$ on vertices $V'$. \\
	Set $T^H$ consistently with motif $H$ .\\
	\For{j=1,\dots,s}{
		Pick a random vertex $v\in V'$ and set $S_j=\{v\}$.\\
		\While{$\abs{S_j}\neq k$}{Pick a random $v'\in N_{G'}(S_j[T^H[j]]):v'\notin S_j$\\
		 Add $v'$ to $S_j$.}
		\lIf{ $H$ is an induced subgraph on $S_j$}{\Return location of $H$}}
\end{algorithm}

The following theorem shows that indeed Algorithm~\ref{alg:motif2} has similar performance guarantees as Algorithm~\ref{alg:motif}.
\begin{theorem}\label{thm:alg2}
	Choose $f_1=f_1(n)\geq 1/\log(n)$ and $f_1<f_2<1$. Choose $s=\Omega(n^{\alpha})$ for some $0<\alpha<1$, such that $s\leq n/k$. Then, Algorithm~\ref{alg:motif2} detects induced subgraph $H$ on $k<\log^{1/3}(n)$ vertices on an inhomogeneous random graph with $n$ vertices and weights distributed as in~\eqref{eq:pl} in time $O(nk)$ with high probability.
\end{theorem}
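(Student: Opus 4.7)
The plan separates the runtime bound from the success-probability analysis. For the runtime, constructing $V'$ by a single pass over the vertices costs $\bigO{n}$. Because $\tau\in(2,3)$ the graph $G$ has $\bigO{n}$ edges with high probability, so the adjacency lists of the induced subgraph $G'$ on $V'$ can be prepared in $\bigO{n}$ additional time. Each of the $s\le n/k$ iterations then performs $k-1$ neighbour samples (with rejection of length $\bigO{k}$ to enforce distinctness) and an $\bigO{k^2}$ induced-subgraph check, for a cost of $\bigO{k^2}$ per iteration. The total is $\bigO{n+sk^2}=\bigO{nk}$.

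For correctness, I would first invoke Lemma~\ref{lem:poly} to condition on $\abs{V'}=\Theta(n^{(3-\tau)/2})$, which in particular diverges with $n$. Since every vertex in $V'$ has weight in $[\sqrt{f_1\mu n},\sqrt{f_2\mu n}]$, any pair $u,v\in V'$ is joined in $G$ with probability $w_uw_v/(\mu n)\in[f_1,f_2]$, where $f_2$ is bounded away from $1$. Thus $G'$ resembles an inhomogeneous Erd\H{o}s--R\'enyi graph whose edge probabilities lie in an interval bounded away from both $0$ and $1$, and a Chernoff bound shows that whp every $v\in V'$ satisfies $\deg_{G'}(v)=\Theta(\abs{V'})$.

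Now condition on $G'$ satisfying these high-probability events, fix a labelling $(u_1,\dots,u_k)$ of $V_H$ consistent with $T^H$ (i.e.\ $u_{i+1}$ is adjacent in $H$ to $u_{T^H[i]}$), and consider a single iteration of Algorithm~\ref{alg:motif2}. The probability that this iteration produces a fixed ordered tuple $(v_1,\dots,v_k)$ of distinct vertices of $V'$ in this order, and that the induced subgraph on it equals $H$ under the labelling $u_i\mapsto v_i$, factorises as $\abs{V'}^{-1}\prod_{i=1}^{k-1}\deg_{G'}(v_{T^H[i]})^{-1}$ times the indicator that the $k-1$ spanning-tree edges are present times the probability that each of the remaining $\binom{k}{2}-(k-1)$ pairs matches $H$. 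Degree regularity turns the product of inverse degrees into $\Theta(\abs{V'}^{-k})$, while each edge or non-edge constraint contributes a factor at least $\min(f_1,1-f_2)\ge 1/\log n$. Summing over the $\Theta(\abs{V'}^{k})$ admissible labelled tuples, the per-iteration success probability is at least $c_k=(1/\log n)^{O(k^2)}$, and since $k<\log^{1/3} n$ this gives $c_k\ge n^{-o(1)}$.

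Given $G'$, the $s$ iterations are independent, so the probability that none succeeds is at most $(1-c_k)^s\le\exp(-c_k s)\le\exp(-n^{\alpha-o(1)})=o(1)$, and removing the conditioning gives the result. The main technical obstacle is controlling the bias introduced by neighbourhood sampling: neighbours are drawn uniformly inside $N_{G'}(S_j[T^H[i]])$ rather than uniformly in $V'$, so the sampling distribution depends on the local structure of $G'$. This is precisely what the degree regularity $\deg_{G'}(v)=\Theta(\abs{V'})$ is for, since it reduces neighbourhood sampling to uniform sampling from $V'$ up to constant factors, at which point the counting argument above applies.
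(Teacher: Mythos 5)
Your runtime analysis and the first half of your correctness argument (conditioning on $\abs{V'}=\Theta(n^{(3-\tau)/2})$ via Lemma~\ref{lem:poly}, the weight/degree control via Lemma~\ref{lem:degweight}, and the degree regularity of $G'$) follow the paper. A small imprecision: the lower bound on the edge probabilities in $G'$ is only $\Omega(1/\log n)$, so the degrees in $G'$ are $\Theta(\abs{V'})$ only up to polylogarithmic factors; as you note, these factors are absorbed into $(1/\log n)^{O(k^2)}$, so this is harmless.

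The genuine gap is in the final step, where you write that ``given $G'$, the $s$ iterations are independent, so the probability that none succeeds is at most $(1-c_k)^s$.'' Conditionally on a fixed realization of $G'$ the iterations are indeed independent, but then the per-iteration success probability is a \emph{random variable} $c_k(G')$ (it could even be zero for unlucky realizations of $G'$), whereas the quantity $c_k$ you computed averages over the edge randomness of $G'$. The step you need, $\Exp{\left(1-c_k(G')\right)^s}\leq \left(1-\Exp{c_k(G')}\right)^s$, is false in general: since $x\mapsto(1-x)^s$ is convex, Jensen gives the \emph{reverse} inequality. The failure events of different iterations are correlated through the shared edges of $G'$ (two iterations that sample overlapping vertex sets query the same edge indicators), so you cannot simply multiply the averaged success probabilities. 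The paper circumvents this by showing that, with high probability, at least $s_{\text{dis}}>n^{\beta}$ of the $s$ sampled sets $S_1,\dots,S_s$ are pairwise \emph{disjoint}; for disjoint vertex sets the relevant edge indicators are independent given the weights, so the failure probability can legitimately be bounded by $\exp\left(-p_-^{k^2}(1-p_+)^{k^2}s_{\text{dis}}\right)$. To repair your argument you would either need to reproduce this disjointness count, or prove a concentration (second-moment) bound showing that $c_k(G')\geq c_k/2$, say, with high probability over $G'$ --- neither of which is present in your write-up.
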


{The proofs of Theorem~\ref{thm:graphletfind} and~\ref{thm:alg2} rely on the fact that for small $k$, any subgraph on $k$ vertices is present in $G'$ with high probability. This means that after the degree selection step of Algorithms~\ref{alg:motif} and~\ref{alg:motif2}, for small $k$, any motif finding algorithm can be used to find motif $H$ on the remaining graph $G'$, such as the Grochow-Kellis algorithm~\cite{grochow2007}, the MAvisto algorithm~\cite{schreiber2005} or the MODA algorithm~\cite{omidi2009}. In the proofs of Theorem~\ref{thm:graphletfind} and~\ref{thm:alg2}, we show that $G'$ has $\Theta(n^{(3-\tau)/2})$ vertices with high probability. Thus, the degree selection step reduces the problem of finding a motif $H$ on $n$ vertices to finding a motif on a graph with $\Theta(n^{(3-\tau)/2})$ vertices, significantly reducing the running time of the algorithms. }



\section{Proof of Theorems~\ref{thm:graphletfind} and~\ref{thm:alg2}}
We prove Theorem~\ref{thm:graphletfind} using two lemmas. 
The first lemma relates the degrees of the vertices to their weights.
The connection probabilities in the inhomogeneous random graph depend on the weights of the vertices. In Algorithm~\ref{alg:motif}, we select vertices based on their degrees instead of their unknown weights. The following lemma shows that the weights of the vertices in $V'$ are close to their degrees.
\begin{lemma}[Degrees and weights] \label{lem:degweight}
	Fix $\varepsilon>0$, and define $J_n=[(1-\varepsilon)\sqrt{f_1 \mu n},(1+\varepsilon)\sqrt{f_2\mu n}]$. Then, for some $K>0$,
	\begin{equation}
	\Prob{\exists i\in V': w_i \notin J_n}\leq K n\exp\Big(-\varepsilon^2\sqrt{\mu n}\min\Big(\frac{\sqrt{f_1}}{1-\varepsilon},\frac{\sqrt{f_2}}{1+\varepsilon}\Big)/2\Big).
	\end{equation}
\end{lemma}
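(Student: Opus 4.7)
The plan is to condition on the weight $w_i$ of each vertex, apply a Chernoff-type concentration bound to $D_i$ around a conditional mean that is close to $w_i$, and union-bound over the $n$ vertices. The probability to control is $\sum_{i=1}^n \Prob{D_i\in I_n,\ w_i\notin J_n} \leq n \sup_{w\notin J_n}\Prob{D_i\in I_n\mid w_i=w}$.

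Given $w_i=w$, the edge indicators $X_{ij}$ with $j\neq i$ are iid: each depends only on the independent weight $w_j$ and an independent coin flip, so $D_i=\sum_{j\neq i}X_{ij}$ is a sum of $n-1$ iid bounded random variables with common mean $p_w:=\Exp{\min(wW/(\mu n),1)}$, where $W$ has the distribution in~\eqref{eq:pl}. Using $\Exp{W}=\mu$ and $\Prob{W>x}=Cx^{1-\t}$ with $\t\in(2,3)$, a direct tail integration gives $p_w = w/n + O((w/(\mu n))^{\t-1})$. Hence $\lambda_i:=(n-1)p_w = w + O(n^{(3-\t)/2})$, which for $w$ of order $\sqrt{\mu n}$ amounts to $\lambda_i = w(1+o(1))$.

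Applied conditionally on $w_i=w$, the multiplicative Chernoff bound for Bernoulli sums yields $\Prob{D_i\geq(1+\delta)\lambda_i\mid w_i=w}\leq \exp(-\delta^2\lambda_i/2)$ and the analogous lower-tail estimate. If $w<(1-\vep)\sqrt{f_1\mu n}$, the worst case arises as $w\uparrow(1-\vep)\sqrt{f_1\mu n}$, forcing $\delta=\vep/(1-\vep)$ to trigger $D_i\geq\sqrt{f_1\mu n}$; this gives a bound of order $\exp(-\vep^2\sqrt{f_1\mu n}/(2(1-\vep)))$. Symmetrically, $w>(1+\vep)\sqrt{f_2\mu n}$ combined with $D_i\leq\sqrt{f_2\mu n}$ gives $\delta=\vep/(1+\vep)$ and $\exp(-\vep^2\sqrt{f_2\mu n}/(2(1+\vep)))$. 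Taking the maximum of these two bounds and multiplying by $n$ reproduces the right-hand side in the lemma, since for $w$ far from $J_n$ (very small or very large) the Chernoff exponent is even bigger, so the supremum over $w\notin J_n$ is attained at the boundary of $J_n$.

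The main obstacle is establishing $\lambda_i=w(1+o(1))$ uniformly in $w\in[0,\sqrt{f_2\mu n}]$ while respecting both the $\min(\cdot,1)$ truncation and the heavy tail of $W$ (which has infinite variance for $\t<3$). Concretely, one must verify by tail integration that the two corrections to $w/n$ coming from $\Prob{W>\mu n/w}$ and from $\Exp{W\mathbbm{1}\{W>\mu n/w\}}$ are both of order $(w/(\mu n))^{\t-1}$, which after multiplying by $n-1$ is $o(w)$ for $w\leq\sqrt{f_2\mu n}$. Once this weight-to-degree estimate is in hand, the Chernoff step and the final union bound are routine.
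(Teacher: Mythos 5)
Your proposal is correct and follows essentially the same route as the paper: reduce to the boundary weights $(1\mp\varepsilon)\sqrt{f_{1,2}\mu n}$ by monotonicity, apply a multiplicative Chernoff bound to the conditional degree $D_i$ given $w_i$ with $\delta=\varepsilon/(1\mp\varepsilon)$, and finish with a union bound over the $n$ vertices. The only difference is that you justify $\Exp{D_i\mid w_i=w}=w(1+o(1))$ by explicitly integrating the truncated connection probability against the heavy-tailed weight distribution, whereas the paper simply asserts $\Exp{D_i}=w_i$; your version is the more careful one and does not change the argument.
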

\begin{proof}
	Fix a vertex $i\in V$. 
	Then,
	\begin{equation}\label{eq:Pwismall}
	\begin{aligned}[b]
	\Prob{w_i<(1-\varepsilon)\sqrt{f_1 \mu n}, \  D_i\in I_n}&=\frac{\Prob{D_i\in I_n\mid w_i< (1-\varepsilon)\sqrt{f_1 \mu n}}}{\Prob{w_i<(1-\varepsilon)\sqrt{f_1\mu n}}}\\
	&\leq \frac{\Prob{D_i>\sqrt{f_1 \mu n}\mid w_i=(1-\varepsilon)\sqrt{f_1\mu n}}}{1-C((1-\varepsilon)\sqrt{f_1\mu n})^{1-\tau}}\\
	&\leq K_1\Prob{D_i>\sqrt{f_1 \mu n}\mid w_i=(1-\varepsilon)\sqrt{f_1\mu n}},
	\end{aligned}
	\end{equation} 
	for some $K_1>0$. 
	
	Here the first inequality follows because the probability that a vertex with weight $w_1$ has degree at least $\sqrt{f_1 \mu n}$ is larger than the probability that a vertex of weight $w_2$ has degree at least $\sqrt{f_1 \mu n}$ when $w_1>w_2$.
	Conditionally on the weights, $D_i$ is the sum of $n-1$ independent indicators indicating the presence of an edge between vertex $i$ and the other vertices and that $\Exp{D_i}=w_i$. Therefore, by the Chernoff bound
	\begin{equation}
	\Prob{D_i>w_i(1+\delta)}\leq \exp\big(-\delta^2w_i/2\big).
	\end{equation}
	Therefore, choosing $\delta=\varepsilon/(1-\varepsilon)$ yields 
	\begin{equation}\label{eq:Pdlarge}
	\begin{aligned}[b]
	\Prob{D_i>\sqrt{f_1 \mu n}\mid w_i=(1-\varepsilon)\sqrt{f_1 \mu n}}&\leq \exp\bigg(-\frac{\varepsilon^2\sqrt{f_1\mu n}}{2(1-\varepsilon)}\bigg)(1+o(1)).
	\end{aligned}
	\end{equation}
	Combining this with~\eqref{eq:Pwismall} and taking the union bound over all vertices then results in
	\begin{equation}
	\Prob{\exists i: D_i\in I_n,w_i<(1-\varepsilon)\sqrt{f_1\mu n}}\leq K_2 n \exp\Big(-\frac{\varepsilon^2}{2(1-\varepsilon)}\sqrt{f_1\mu n}\Big),
	\end{equation}
	for some $K_2>0$. Similarly,
	\begin{equation}
	\Prob{\exists i: D_i\in I_n,w_i>(1+\varepsilon)\sqrt{f_2\mu n}}\leq K_3 n \exp\Big(-\frac{\varepsilon^2}{2(1+\varepsilon)}\sqrt{f_2\mu n}\Big),
	\end{equation}
	for some $K_3>0$, which proves the lemma.
\end{proof}

\begin{lemma}[Weights and degrees]\label{lem:weightslb}
	Fix $\varepsilon>0$, sufficiently small so that $\tilde{J}_n=[(1+\varepsilon)\sqrt{f_1 \mu n},(1-\varepsilon)\sqrt{f_2\mu n}]$ is a non-empty interval. Then, for some $K>0$,
	\begin{equation}
	\Prob{\exists i: w_i \in \tilde{J}_n, i\notin V'}\leq K n\exp\Big(-\varepsilon^2\sqrt{\mu n}\min\Big(\frac{\sqrt{f_1}}{1-\varepsilon},\frac{\sqrt{f_2}}{1+\varepsilon}\Big)/2\Big).
	\end{equation}
\end{lemma}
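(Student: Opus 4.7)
The plan is to mirror the proof of Lemma~\ref{lem:degweight}, reversing the roles of weights and degrees. The lemma says that if a vertex has weight in $\tilde{J}_n$, then its degree lands inside $I_n$ (so the vertex is selected into $V'$) with overwhelming probability. We again apply the Chernoff bound to the conditional degree distribution, together with a monotonicity argument, and conclude by a union bound over the $n$ vertices.

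First I would fix a vertex $i$ and split the event $\{w_i\in\tilde{J}_n,\ i\notin V'\}$ according to which side of $I_n$ the degree $D_i$ sits on, i.e.\ into $\{w_i\in\tilde{J}_n,\ D_i<\sqrt{f_1\mu n}\}$ and $\{w_i\in\tilde{J}_n,\ D_i>\sqrt{f_2\mu n}\}$. For the first event, the same monotonicity observation used in the proof of Lemma~\ref{lem:degweight} (a vertex with larger weight has stochastically larger degree) shows that the conditional probability $\Prob{D_i<\sqrt{f_1\mu n}\mid w_i=w}$ is largest at the left endpoint $w=(1+\varepsilon)\sqrt{f_1\mu n}$ of $\tilde{J}_n$. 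Writing $\sqrt{f_1\mu n}=(1-\delta)w$ with $\delta=\varepsilon/(1+\varepsilon)$ and recalling that, conditionally on $w_i$, $D_i$ is a sum of $n-1$ independent indicators with mean $w_i$, the lower-tail Chernoff inequality $\Prob{D_i<(1-\delta)w_i}\le\exp(-\delta^2 w_i/2)$ gives a bound of order $\exp\!\bigl(-\varepsilon^2\sqrt{f_1\mu n}/(2(1+\varepsilon))\bigr)$.

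For the second event, the same monotonicity argument shows that $\Prob{D_i>\sqrt{f_2\mu n}\mid w_i=w}$ is maximized at the right endpoint $w=(1-\varepsilon)\sqrt{f_2\mu n}$. Setting $\sqrt{f_2\mu n}=(1+\delta)w$ with $\delta=\varepsilon/(1-\varepsilon)$ and applying the upper-tail Chernoff inequality $\Prob{D_i>(1+\delta)w_i}\le\exp(-\delta^2 w_i/(2+\delta))$ gives a bound of order $\exp\!\bigl(-\Theta(\varepsilon^2)\sqrt{f_2\mu n}\bigr)$, which is comfortably smaller than $\exp\!\bigl(-\varepsilon^2\sqrt{f_2\mu n}/(2(1+\varepsilon))\bigr)$. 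Taking the maximum of the two bounds (equivalently, the $\min$ inside the exponent), summing over all $n$ vertices via a union bound, and absorbing lower-order factors into a constant $K$ yields the statement.

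The main technical nuisance, rather than any real obstacle, will be matching the precise exponent advertised in the lemma: the natural Chernoff constants produce $(1\pm\varepsilon)$ factors that do not line up verbatim with $\sqrt{f_1}/(1-\varepsilon)$ and $\sqrt{f_2}/(1+\varepsilon)$, but by replacing $\varepsilon$ with a slightly larger $\varepsilon'$ or by using the weaker (but still exponential) form of Chernoff one sees that both sub-bounds sit inside the stated $\min$, so the uniform bound in the conclusion follows.
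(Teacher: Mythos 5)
Your proposal is correct and follows essentially the same route as the paper: split the failure event into the two degree tails, reduce to the endpoint weights by stochastic monotonicity, apply the Chernoff bound conditionally on the weight, and union-bound over the $n$ vertices. Your closing remark about the $(1\pm\varepsilon)$ factors not matching the advertised exponent verbatim is apt --- the paper's own proof produces $\sqrt{f_1}/(1+\varepsilon)$ and $\sqrt{f_2}/(1-\varepsilon)$ and silently identifies them with the stated $\min$.
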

\begin{proof}
	Fix a vertex $i$ with $w_i\in\tilde{J}_n$. Then,
	\begin{equation}
		\Prob{D_i<\sqrt{f_1 \mu n}\mid w_i\in\tilde{J}_n}
		\leq \Prob{D_i<\sqrt{f_1 \mu n}\mid w_i=(1+\varepsilon)\sqrt{f_1\mu n}}.
	\end{equation}
	Similarly to~\eqref{eq:Pdlarge},
	\begin{equation}
		 \Prob{D_i<\sqrt{f_1 \mu n}\mid w_i=(1+\varepsilon)\sqrt{f_1\mu n}} \leq \exp\Big(-\frac{\varepsilon^2\sqrt{f_1 n}}{2(1+\varepsilon)}\Big),
	\end{equation}
	so that 
	\begin{equation}
	\Prob{\exists i: w_i \in \tilde{J}_n, D_i<\sqrt{f_1 \mu n}}\leq K_1n\exp\Big(-\frac{\varepsilon^2\sqrt{f_1 n}}{2(1+\varepsilon)}\Big).
	\end{equation}
	Similarly,
		\begin{equation}
	\Prob{\exists i: w_i \in \tilde{J}_n, D_i>\sqrt{f_2 \mu n}}\leq K_2n\exp\Big(-\frac{\varepsilon^2\sqrt{f_2 n}}{2(1-\varepsilon)}\Big).
	\end{equation}
\end{proof}

The second lemma shows that after deleting all vertices with degrees outside of $I_n$ defined in Step 1 of Algorithm~\ref{alg:motif}, still polynomially many vertices remain with high probability.
\begin{lemma}{Polynomially many nodes remain.}\label{lem:poly}
	There exists $\gamma, \gamma'>0$ such that
	\begin{equation}
	\Prob{|V'|<\gamma n^{(3-\tau)/2}}\leq 2\exp\left(-\Theta(n^{(3-\tau)/2})\right)
	\end{equation}
	and
	\begin{equation}
	\Prob{|V'|>\gamma' n^{(3-\tau)/2}\log^{\tau-1}(n)}\leq 2\exp\left(-\Theta(n^{(3-\tau)/2})\right)
	\end{equation}
\end{lemma}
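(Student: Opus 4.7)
The plan is to sandwich $|V'|$ between the counts of vertices whose \emph{weights} lie in a slightly larger and a slightly smaller interval than $I_n$, then apply a Chernoff bound to these counts, which are sums of i.i.d.\ indicators because the weights are i.i.d.

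Fix a small $\varepsilon>0$, and let $J_n$ and $\tilde J_n$ be as in Lemmas~\ref{lem:degweight} and~\ref{lem:weightslb}. Define
\begin{equation}
N_n = \#\{i : w_i\in J_n\},\qquad \tilde N_n = \#\{i : w_i\in \tilde J_n\}.
\end{equation}
By Lemma~\ref{lem:weightslb}, with probability at least $1-Kn\exp(-\Theta(\sqrt n))$ every vertex with $w_i\in\tilde J_n$ belongs to $V'$, so $|V'|\geq \tilde N_n$. By Lemma~\ref{lem:degweight}, with the same type of probability every vertex in $V'$ satisfies $w_i\in J_n$, so $|V'|\leq N_n$. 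Both bad events have probability much smaller than $\exp(-\Theta(n^{(3-\tau)/2}))$, so it suffices to prove the two concentration statements for $\tilde N_n$ and $N_n$.

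Using~\eqref{eq:pl}, a direct computation gives
\begin{equation}
\Exp{\tilde N_n} = n\bigl[C((1+\varepsilon)\sqrt{f_1\mu n})^{1-\tau}-C((1-\varepsilon)\sqrt{f_2\mu n})^{1-\tau}\bigr]=\Theta\bigl(n^{(3-\tau)/2}\bigr),
\end{equation}
where the lower bound on the bracket uses $f_1<af_2$ with $a<1$ and is uniform in $f_1\geq 1/\log n$ since shrinking $f_1$ only makes the first term larger. Similarly
\begin{equation}
\Exp{N_n}\leq n\, C\bigl((1-\varepsilon)\sqrt{f_1\mu n}\bigr)^{1-\tau}=O\bigl(n^{(3-\tau)/2}f_1^{(1-\tau)/2}\bigr)=O\bigl(n^{(3-\tau)/2}(\log n)^{(\tau-1)/2}\bigr),
\end{equation}
which is well within the claimed $n^{(3-\tau)/2}\log^{\tau-1}(n)$. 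Since $\tilde N_n$ and $N_n$ are sums of $n$ independent $\{0,1\}$-valued indicators, the standard multiplicative Chernoff bound yields
\begin{equation}
\Prob{\tilde N_n<\tfrac12\Exp{\tilde N_n}}\leq \exp\bigl(-\Theta(n^{(3-\tau)/2})\bigr),\qquad \Prob{N_n>2\Exp{N_n}}\leq \exp\bigl(-\Theta(n^{(3-\tau)/2})\bigr),
\end{equation}
and the two statements of the lemma follow by choosing $\gamma=\tfrac12 c$ with $c>0$ the constant from the lower bound on $\Exp{\tilde N_n}$, and $\gamma'$ appropriately.

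The only genuinely delicate point is making the constants uniform in the allowed range of $f_1,f_2$: for the lower bound one must check that $\Exp{\tilde N_n}/n^{(3-\tau)/2}$ is bounded away from zero uniformly in $n$, and for the upper bound one must track the $f_1^{(1-\tau)/2}$ factor to justify the $\log^{\tau-1}(n)$ slack (which is crude but plenty, since the sharp exponent is $(\tau-1)/2$). Everything else is routine power-law algebra plus Chernoff.
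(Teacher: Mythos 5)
Your proposal is correct and follows essentially the same route as the paper: sandwich $|V'|$ between the i.i.d.\ weight-based counts for the intervals $\tilde J_n$ and $J_n$ via Lemmas~\ref{lem:weightslb} and~\ref{lem:degweight}, compute the expected counts from the power law, and apply the multiplicative Chernoff bound (your upper-bound exponent $(\log n)^{(\tau-1)/2}$ is in fact slightly sharper than the paper's $(\log n)^{\tau-1}$, but both fit the stated claim). No gaps.
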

\begin{proof}
	Let $\mathcal{E}$ denote the event that all vertices with $w_i\in\tilde{J}_n$ satisfy $i\in V'$ for some $\varepsilon>0$, with $\tilde{J}_n$ as in Lemma~\ref{lem:weightslb}. Let $W'$ be the set of all vertices with weights in $\tilde{J}_n$. Conditioned on the event $\mathcal{E}$, any vertex in $W'$ is also in $V'$ so that $|W'|\leq |V'|$.
	Then, by Lemma~\ref{lem:degweight}
	\begin{align}\label{eq:VWl}
		\Prob{|V'|<\gamma n^{(3-\tau)/2}}& \leq \Prob{|W'|<\gamma n^{(3-\tau)/2}}+
		K n\exp\Big(-\varepsilon^2\sqrt{\mu n}\min\Big(\frac{\sqrt{f_1}}{1-\varepsilon},\frac{\sqrt{f_2}}{1+\varepsilon}\Big)/2\Big)
	\end{align}
Furthermore,
	\begin{equation}
	\begin{aligned}[b]
	\Prob{w_i\in \tilde{J}_n}  & = C((1+\varepsilon)\sqrt{f_1\mu n})^{1-\tau}- C((1-\varepsilon)\sqrt{f_2\mu n})^{1-\tau} \geq c_1(\sqrt{n})^{1-\tau}
	\end{aligned}
	\end{equation}
	for some constant $c_1>0$ when $\varepsilon$ is sufficiently small, because $f_1<af_2$ for some constant $a<1$ by assumption. Thus, each of the $n$ vertices is in set $W'$ independently with probability at least $c_1(\sqrt{\mu n})^{1-\tau}$. Choose $0<\gamma<c_1$. Applying the multiplicative Chernoff bound then shows that
	\begin{equation}\label{eq:Wbound}
	\begin{aligned}[b]
	\Prob{|W'|<\gamma n^{(3-\tau)/2}}& \leq \exp\left(-\frac{(c_1-\gamma)^2}{2c_1} n^{(3-\tau)/2}\right),
	\end{aligned}
	\end{equation}
	which proves the first part of the lemma together with~\eqref{eq:VW}.
	
	Let $\mathcal{E}'$ denote the event that all vertices $i\in V'$ satisfy $w_i\in J_n$ for some $\varepsilon>0$, with ${J}_n$ as in Lemma~\ref{lem:degweight}. Let $U'$ be the set of all vertices with weights in ${J}_n$. On the event $\mathcal{E}'$, any vertex in $V'$ is also in $U'$ so that $|U'|\geq |V'|$.
	Then, by Lemma~\ref{lem:degweight}
	\begin{align}\label{eq:VW}
	\Prob{|V'|>\gamma' n^{(3-\tau)/2}}& \leq \Prob{|U'|>\gamma' n^{(3-\tau)/2}}+
	K n\exp\Big(-\varepsilon^2\sqrt{\mu n}\min\Big(\frac{\sqrt{f_1}}{1-\varepsilon},\frac{\sqrt{f_2}}{1+\varepsilon}\Big)/2\Big).
	\end{align}
	Furthermore, for some $c_2>0$,
	\begin{equation}
	\Prob{w_i\in J_n}   = C((1-\varepsilon)\sqrt{f_1\mu n})^{1-\tau}- C((1+\varepsilon)\sqrt{f_2\mu n})^{1-\tau} \leq c_2(\sqrt{n}/\log(n))^{1-\tau},
	\end{equation}
	where we used that $f_1\geq 1/\log(n)$.
	Similarly to~\eqref{eq:Wbound}, 
	\begin{equation}
	\begin{aligned}[b]
	\Prob{|U'|>\gamma' n^{(3-\tau)/2}\log^{\tau-1}(n)}& \leq \exp\left(-\frac{(c_2-\gamma')^2}{2c_2}\log^{\tau-1}(n) n^{(3-\tau)/2}\right),
	\end{aligned}
	\end{equation}
	which proves the second part of lemma.
\end{proof}

We now use these lemmas to prove Theorem~\ref{thm:graphletfind}.\\\\
\noindent
{\it Proof of Theorem~\ref{thm:graphletfind}.}
	We condition on the event that $V'$ is of polynomial size (Lemma~\ref{lem:poly}) and that the weights are within the constructed lower and upper bounds (Lemma~\ref{lem:degweight}), since both events occur with high probability. This bounds the edge probability between any pair of nodes $i$ and $j$ in $V'$ as
	\begin{equation}\label{eq:pijub}
	p_{ij} <\min\left(\frac{(1+\varepsilon)\sqrt{f_2\mu n}(1+\varepsilon)\sqrt{f_2\mu n}}{\mu n},1\right) = f_2(1+\varepsilon)^2.
	\end{equation}
	Because $f_2<b<1$ for some constant $b$, $p_{ij}\le p_+ <1$ for some constant $p_+$ if we choose $\varepsilon$ small enough. Similarly,
	\begin{equation} \label{eq:pijlb}
	p_{ij} >\min\left(\frac{(1-\varepsilon)^2\sqrt{f_1\mu n}^2}{\mu n},1\right)= \T\left(\frac{1}{\log(n)}\right),
	\end{equation}
	by our choice of $f_1$, so that $p_{ij} \ge  c_2/\log(n)=:p_-$ for some constant $c_2$.
	Let $E:=|E_H|$ be the number of edges in $H$. We upper bound the probability of not finding $H$ in one of the partitions of size $k$ of $V'$ as $1-p_-^{E}(1-p_+)^{\binom{k}{2}-E}$. Since all partitions are disjoint we can upper bound the probability of not finding $H$ in any of the partitions as
	\begin{equation}
	\P\left( H\text{ not in any set of partitions} \right) \le \left(1-p_-^E(1-p_+)^{\binom{k}{2}-E}\right)^{\left \lfloor{\frac{|V'|}{k}}\right \rfloor}.
	\end{equation}
	Using that $E\le k^2$, $\binom{k}{2}-E \le k^2$ and that $1-x\leq \me^{-x}$ results in
	\begin{equation}\label{eq:pHbound}
	\P\left( H\text{ not in the partitions} \right) \le \exp\left( -p_-^{k^2}(1-p_+)^{k^2} \left \lfloor{\frac{|V'|}{k}}\right \rfloor  \right).
	\end{equation}
	Since $\abs{V'} = \Omega\left(n^{\frac{3-\t}{2}}\right)$, 
	$\left \lceil {|V'|/k} \right \rceil \ge d n^{\frac{3-\t}{2}}/k$ for some constant $d>0$. We fill in the expressions for $p_-$ and $p_+$, with $c_3>0$ a constant
	\begin{equation} \label{temp}
	\P\left( H\text{ not in the partitions} \right) \le \exp \left(- \frac{d n^{\frac{3-\t}{2}}}{k} \left( \frac{c_3}{\log n}\right)^{k^2}  \right).
	\end{equation}
	Now apply that $k\le \log^{\frac{1}{3}} (n)$. Then 
	\begin{equation}
	\begin{array}{rcl}
	\P\left( H\text{ not in the partitions} \right) &\le &
	\exp \left(- \frac{d n^{\frac{3-\t}{2}}}{\log^{\frac{1}{3}} n} \left( \frac{c_3}{\log n}\right)^{\log^{\frac{2}{3}} n}   \right)\\
	&\le & \exp \left(- d n^{\frac{3-\t}{2}-o(1)}\right).\\
	\end{array}
	\end{equation}
	Hence, the inner expression grows polynomially such that the probability of not finding $H$ in one of the partitions is negligibly small. The running time of the partial search is given by
	\begin{equation}\label{eq:psearch}
		\frac{\abs{V'}}{k}{k\choose 2}\leq \frac{n}{k}{k\choose 2}\leq n k\leq n \me^{k^4},
	\end{equation}
	which concludes the proof for $k\leq \log^{1/3}(n)$.
	\hfill \qed
\vspace{0.2cm}
\noindent
\begin{proof}[Proof of Corollary~\ref{cor:fpt}.]
	If $k>\log^{\frac{1}{3}} (n)$, $n<e^{k^3}$, so that the time it takes to solve the subgraph isomorphism problem is bounded by a function of $k$.
	
	For $k\le\log^{\frac{1}{3}}( n)$, Theorem~\ref{thm:graphletfind} shows that the induced subgraph isomorphism problem can be solved in time $nk\leq n \me^{k^4}$. Thus, with high probability the induced subgraph isomorphism problem can be solved in $n\me^{k^4}$ time, which proves that it is in typFPT.
\end{proof}

\begin{proof}[Proof of Theorem~\ref{thm:alg2}.]
The proof of Theorem~\ref{thm:alg2} is very similar to the proof of Theorem~\ref{thm:graphletfind}. The only way Algorithm~\ref{alg:motif2} differs from Algorithm~\ref{alg:motif} is in the selection of the sets $S_j$. As in the previous theorem, we condition on the event that $\gamma n^{(3-\tau)/2}\leq \abs{V'}\leq \gamma'n^{(3-\tau)/2}\log^{\tau-1}(n)$ (Lemma~\ref{lem:poly}) and that the weights of the vertices in $G'$ are bounded as in Lemma~\ref{lem:degweight}. 

The graph $G'=(V',E')$ constructed in Step 5 of Algorithm~\ref{alg:motif2} then consists of $\Theta(n^{(3-\tau)/2})$ vertices. Furthermore, by the bound~\eqref{eq:pijlb} on the connection probabilities of all vertices in $G'$, the expected degree of a vertex $i$ in $G'$, $D_{i,G'}$, satisfies $\Exp{D_{i,G'}}=\Omega(n^{(3-\tau)/2}/\log(n))$. We can use similar arguments as in Lemma~\ref{lem:degweight} to show that $D_{i,G'}=\Omega(n^{(3-\tau)/2}/\log(n))$ with high probability for all vertices in $G'$. Since $G'$ consists of $O(n^{(3-\tau)/2}\log(n)^{\tau-1})$ vertices, $D_{i,G'}=O(n^{(3-\tau)/2}\log(n)^{\tau-1})$ as well. This means that for $k<\log^{\frac{1}{3}}(n)$, Steps 8-11 are able to find a connected subgraph on $k$ vertices with high probability.


We now compute the probability that $S_j$ is disjoint with the previous $j-1$ constructed sets. The number of vertices in sets $S_{j-1},\dots, S_1$ is bounded by $(j-1)k$. The probability that the first vertex does not overlap with the previous sets is therefore bounded by $1-(j-1)k/\abs{V'}$, since that vertex is chosen uniformly at random. The second vertex is chosen in a size-biased manner, since it is chosen by following a random edge. The probability that vertex $i$ is added can therefore be bounded as
\begin{equation}
\Prob{\text{vertex $i$ is added}}=\frac{D_{i,G'}}{{2|E'|}}\leq \frac{M\log^{\tau-1}(n)}{\abs{V'}}
\end{equation}
for some constant $M>0$ by the conditions on the degrees. Therefore, the probability that $S_j$ does not overlap with one of the previously chosen (at most $(j-1)k$) vertices can be bounded from below by 
\begin{equation}
\Prob{S_j\text{ does not overlap with previous sets}}\geq \left(1-\frac{k(j-1)}{\abs{V'}}\right)\left(1-\frac{Mk(j-1)\log^{\tau-1}(n)}{\abs{V'}}\right)^{k-1}.
\end{equation}
Thus, the probability that all $j$ sets do not overlap can be bounded as
\begin{equation}
\Prob{S_j \cap S_{j-1}\dots \cap S_1=\emptyset}\geq \left(1-\frac{Mk(j-1)\log^{\tau-1}(n)}{\abs{V'}}\right)^{(j-1)k},
\end{equation}
which tends to one when $jk=o(n^{(3-\tau)/4})$. Let $s_{\text{dis}}$ denote the number of disjoint sets out of the $s$ sets constructed in Algorithm~\ref{alg:motif2}. Then, when $s=\Omega(n^\alpha)$ for some $\alpha>0$, $s_{\text{dis}}>n^{\beta}$ for some $\beta>0$ with high probability, because $k<\log^{1/3}(n)$.

The probability that $H$ is present as an induced subgraph is bounded similarly as in Theorem~\ref{thm:graphletfind}. We already know that $k-1$ edges are present. For all other $E-(k-1)$ edges of $H$, and all ${k\choose 2}-E$ edges that are not present in $H$, we can again use~\eqref{eq:pijub} and~\eqref{eq:pijlb} to bound on the probability of edges being present or not being present between vertices in $V'$. Therefore, we can bound the probability that $H$ is not found similarly to~\eqref{eq:pHbound} as
\begin{equation*}
\begin{aligned}
\P\left( H\text{ not in the partitions} \right)
&\leq \P\left( H\text{ not in the disjoint partitions} \right)\\
& \le \exp\left( -p_-^{k^2}(1-p_+)^{k^2}  s_{\text{dis}} \right).
\end{aligned}
\end{equation*}
Because $s_{\text{dis}}>n^\beta$ for some $\beta>0$, this term tends to zero exponentially. 
The running time of the partial search can be bounded similarly to~\eqref{eq:psearch} as
\begin{equation}
s{k\choose 2}\leq sk^2=O(n k),
\end{equation}
where we used that $s\leq n/k$.
\end{proof}

\section{Experimental results}\label{sec:exp}
Figure~\ref{fig:succ1} shows the success rate of Algorithm~\ref{alg:motif}, defined as the fraction of times Algorithm~\ref{alg:motif} succeeds in finding a cycle of size $k$ in an inhomogeneous random graph on $10^7$ vertices. Even though for large $n$ Algorithm~\ref{alg:motif} should find an instance of a cycle of size $k$ in step 7 of the algorithm with high probability, we see that Algorithm~\ref{alg:motif} never succeeds in finding one of size 7. The success rate of Algorithm~\ref{alg:motif} on smaller cycles is also far away from 1, because of the finite size effects discussed before.

\begin{figure}[tb]
	\centering
	\includegraphics[width=0.45\textwidth]{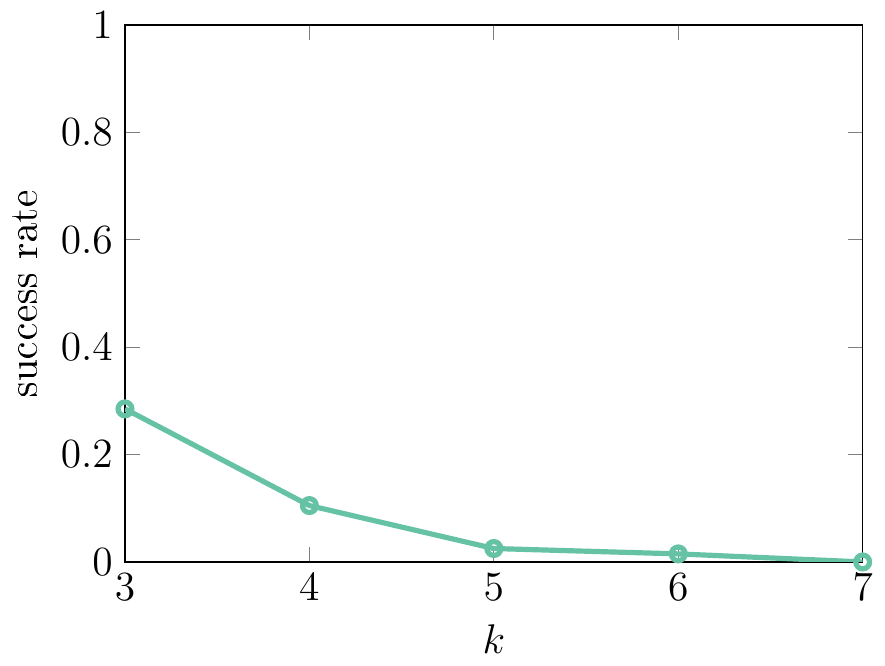}
	\caption{The fraction of times step 7 in Algorithm~\ref{alg:motif} succeeds to find a cycle of length $k$ in an inhomogeneous random graph with $N=10^7$, $\tau=2.5$, averaged over 500 network samples with $f_1=1/\log(n)$ and $f_2=0.9$.}
	\label{fig:succ1}
\end{figure}

Figure~\ref{fig:succ2} also plots the fraction of times Algorithm~\ref{alg:motif2} succeeds to find a cycle. We set the parameter $s=10000$ so that the algorithm fails if the algorithm does not succeed to detect motif $H$ after executing step 13 of Algorithm~\ref{alg:motif2} 10000 times. {Because $s$ gives the number of attempts to find $H$, increasing $s$ may increase the success probability of Algorithm~\ref{alg:motif2} at the cost of a higher running time. However, in Figure~\ref{fig:checks}, for small values of $k$, the mean number of times Step 13 is executed when the algorithm succeeds is much lower than 10000, so that increasing $s$ in this experiment probably only has a small effect on the success probability.} 

Algorithm~\ref{alg:motif2} with $f_1=1/\log(n)$ and $f_2=0.9$ in line with Theorem~\ref{thm:alg2} outperforms Algorithm~\ref{alg:motif}. Figure~\ref{fig:checks} also shows that the number of attempts needed to detect a cycle of length $k$ is small for $k\leq 6$. For larger values of $k$ the number of attempts increases. This can again be ascribed to the finite size effects that cause the set $V'$ to be small, so that large motifs may not be present on vertices in set $V'$. 

We also plot the success probability when using different values of the functions $f_1$ and $f_2$, outside the window where Theorem~\ref{thm:alg2} holds. When $f_2=\infty$, all vertices of degree at least $\sqrt{f_1\mu n}$ are included in $V'$, as in~\cite{friedrich2015}. In this setting, the success probability of the algorithm decreases. This is because the set $V'$ now contains many high degree vertices that are much more likely to form clique motifs than cycles or other connected motifs on $k$ vertices. This makes $f_1=1/\log(n), f_2=\infty$ a very efficient setting for detecting clique motifs~\cite{friedrich2015}. For the cycle motif however, Figure~\ref{fig:checks} shows that more checks are needed before a cycle is detected, and in some cases the cycle is not detected at all. 

The setting $f_1=0$ and $f_2=\infty$ is equivalent to including all vertices in $|V'|$. This is also less efficient than the setting of Theorem~\ref{thm:alg2}, as Figure~\ref{fig:succ2} shows. In this situation, the number of attempts needed to find a cycle of length $k$ is larger than for Algorithm~\ref{alg:motif2} for $k\leq 6$.

\begin{figure}[htb]
	\centering
	\begin{subfigure}{0.45\linewidth}
		\centering
		\includegraphics[width=\textwidth]{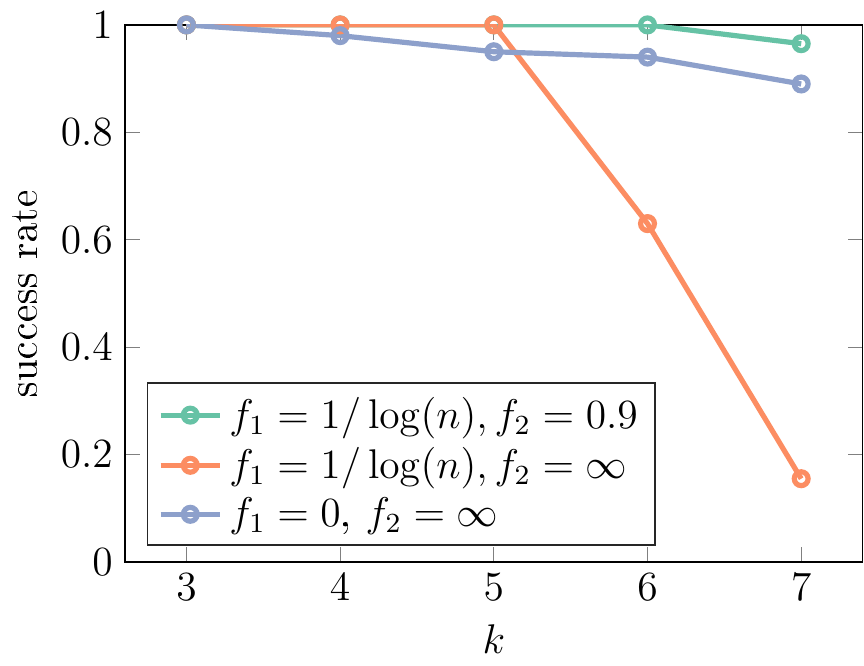}
		\caption{Success probability}
		\label{fig:succ2}
	\end{subfigure}
\begin{subfigure}{0.45\linewidth}
	\centering
	\includegraphics[width=\textwidth]{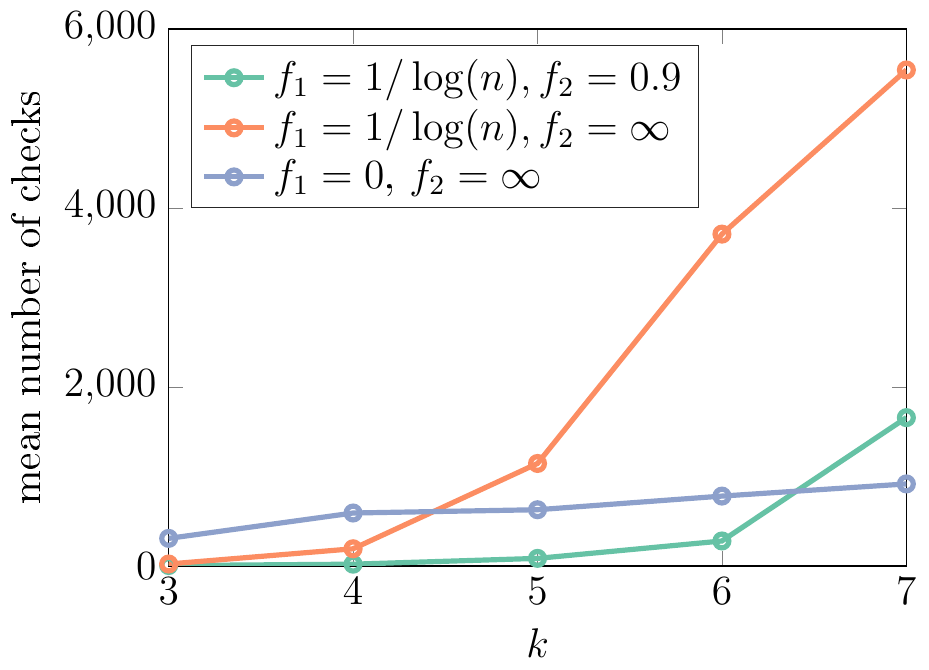}
	\caption{Mean number of checks}
	\label{fig:checks}
\end{subfigure}
\caption{Results of Algorithm~\ref{alg:motif2} on an inhomogeneous random graph with $N=10^7$, $\tau=2.5$ for detecting cycles of length $k$, using $s=10000$. The values are averaged over 500 generated networks.}
\end{figure}

Most results in this section are about finding cycles, since cycles easily scale in $k$ (larger cycles). However, Algorithm~\ref{alg:motif2} with parameters as in Theorem~\ref{thm:alg2} also helps to find other subgraphs than cycles. Table~\ref{tab:succhouse} presents the probability that the algorithm detects the subgraph of Figure~\ref{fig:subghouse}. Indeed, the probability that Algorithm~\ref{alg:motif2} with parameters chosen as in Theorem~\ref{thm:alg2} succeeds is much larger than the probability that the algorithm succeeds on the whole data set.

\begin{table}[htbp]
	\centering
	\begin{tabular}{llll}
		\toprule
		& $f_1=1/\log(n), f_2=0.9$ & \multicolumn{1}{l}{$f_1=1/\log(n), f_2=\infty$} & \multicolumn{1}{l}{$f_1=0, f_2=\infty$} \\
		\midrule
		\textbf{Success rate} & 0.47  & 0     & 0 \\
		\bottomrule
	\end{tabular}%
\caption{Success rate of Algorithm~\ref{alg:motif2} on the subgraph of Figure~\ref{fig:subghouse} for $N=10^6$, $\tau=2.5$, $s=10000$ over 500 generated networks.}
	\label{tab:succhouse}%
\end{table}%

\begin{figure}[tb]
	\centering
	\definecolor{mycolor1}{HTML}{66c2a5}
	\tikzstyle{every node}=[circle,fill=mycolor1,minimum size=8pt,inner sep=0pt,draw=black!80]
	\begin{tikzpicture}
	\centering
	\tikzstyle{edge} = [draw,thick,-]
	\node[] (a) at (0,0) {};
	\node[] (c) at (0,1) {};
	\node[] (e) at (0.5,1.5) {};
	\node[] (b) at (1,0) {};
	\node[] (d) at (1,1) {};
	\draw[edge] (a)--(c);
	\draw[edge] (a)--(b);
	\draw[edge] (e)--(c);
	\draw[edge] (d)--(e);
	\draw[edge] (b)--(d);
	\draw[edge] (d)--(c);
	\end{tikzpicture}
	\caption{The subgraph corresponding to the results in Table~\ref{tab:succhouse}.}
	\label{fig:subghouse}
\end{figure}
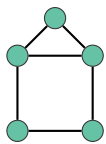

\subsection{Real network data}
We now check Algorithm~\ref{alg:motif2} on four real-world networks with power-law degrees: a Wikipedia communication network~\cite{snap}, the Gowalla social network~\cite{snap}, the Baidu online encyclopedia~\cite{niu2011} and the Internet on the autonomous systems level~\cite{snap}. {Table~\ref{tab:data} presents several statistics of these scale-free data sets.} Figure~\ref{fig:succdata} shows the fraction of runs where Algorithm~\ref{alg:motif2} finds a cycle as an induced subgraph. We see that for the Wikipedia social network in Figure~\ref{fig:succwiki}, Algorithm~\ref{alg:motif2} with $f_1=1/\log(n)$ and $f_2=0.9$ in line with Theorem~\ref{thm:alg2} is more efficient than looking for cycles among all vertices in the network ($f_1=0,f_2=\infty$). For the Baidu online encyclopedia in Figure~\ref{fig:succbaidu} however, we see that Algorithm~\ref{alg:motif2} with $f_1=1/\log(n)$ and $f_2=0.9$ performs much worse than looking for cycles among all possible vertices. In the other two network data sets in Figures~\ref{fig:succgowalla} and~\ref{fig:succasskit} the performance on the reduced vertex set and the original vertex set is almost the same. Figure~\ref{fig:meandata} shows that in general, Algorithm~\ref{alg:motif2} with settings as in Theorem~\ref{thm:alg2} indeed seems to finish in fewer steps than when using the full vertex set. However, as Figure~\ref{fig:meanbaidu} shows, for larger values of $k$ the algorithm fails almost always.

\begin{table}[tbp]
	\centering
	\begin{tabular}{lrrr}
		\toprule
		& $n$ & $E$ & $\tau$ \\
		\midrule
		\textbf{Wikipedia} & 2,394,385 & 5,021,410 & 2.46 \\
		\textbf{Gowalla} & 196,591 & 950,327 & 2.65  \\
		\textbf{Baidu} & 2,141,300 & 17,794,839 & 2.29  \\
		\textbf{AS-Skitter} & 1,696,415 & 11,095,298 & 2.35   \\
		\bottomrule \\
	\end{tabular}%
\caption{Statistics of the data sets: the number of vertices $n$, the number of edges $E$, and the power-law exponent $\tau$ fitted by the method of~\cite{clauset2009}.}
	\label{tab:data}%
\end{table}%

These results show that while Algorithm~\ref{alg:motif2} with $f_1,f_2$ in line with Theorem~\ref{thm:alg2} is efficient on inhomogeneous random graphs, it may not always be efficient on real-world data sets. This is not surprising, because there is no reason why in real-world data the vertices of degrees proportional to $\sqrt{n}$ should behave like an Erd\H{o}s-R\'enyi random graph, like in the inhomogeneous random graph. Thus, in terms of subgraphs, the inhomogeneous random graph and real-world network data differ significantly.

We therefore investigate whether selecting vertices with degrees in $I_n=[(\mu n)^\gamma/\log(n),(\mu n)^\gamma]$ for some other value of $\gamma$ in Algorithm~\ref{alg:motif2} leads to a better performance. Figure~\ref{fig:succdata} and~\ref{fig:meandata} show for every data set one particular value of $\gamma$ that works well. For the Gowalla, Wikipedia and Autonomous systems network, this leads to a faster algorithm to detect cycles. In these examples, the success probability of the algorithm is similar to the success probability on the full data set, but Figure~\ref{fig:meandata} shows that it finds the cycle much faster than the algorithm on the full data set. Only for the Baidu network other values of $\gamma$ do not improve upon randomly selecting from all vertices. This indicates that for most networks, cycles do appear mostly on degrees with specific orders of magnitude, making it possible to sample these cycles faster. Unfortunately, these orders of magnitude may be different for different networks. Across all four networks, the best value of $\gamma$ seems to be smaller than the value of 0.5 that is optimal for the inhomogeneous random graph.

{In these experiments, we tested the values $\gamma=0.1,0.2,0.3,0.4,0.5$, and we present in Figure~\ref{fig:succdata} the values of $\gamma$ that worked best for each data set. However, it would be useful to be able to select the best value of $\gamma$ without trying several values at first. For example, it may be possible to relate $\gamma$ to the degree-exponent $\tau$, or to a specific quantile of the degree sequence. Finding efficient methods to estimate $\gamma$ from a data set is an interesting question for further work. }

\begin{figure}[tb]
	\centering
	\begin{subfigure}{0.45\linewidth}
		\centering
		\includegraphics[width=\textwidth]{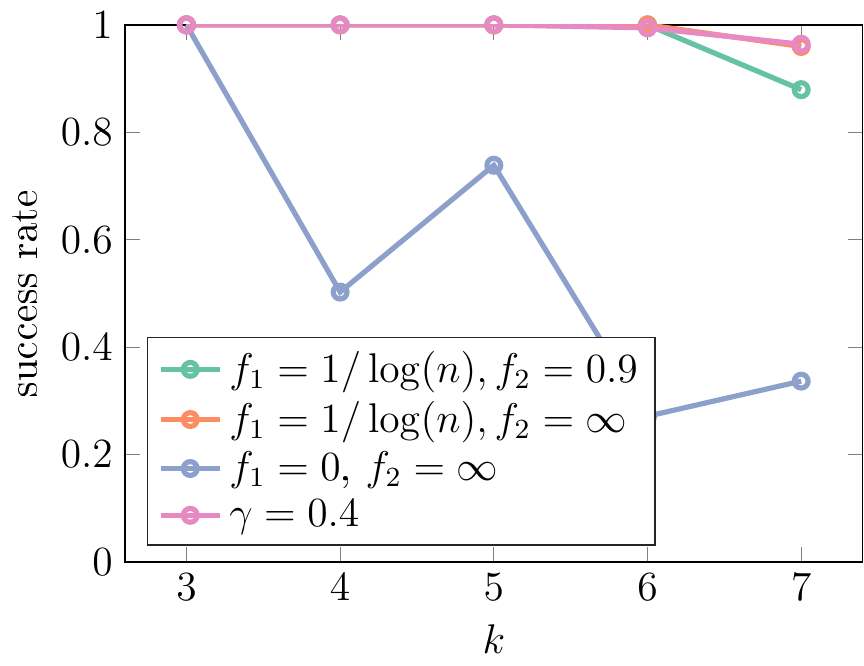}
		\caption{Wikipedia}
		\label{fig:succwiki}
	\end{subfigure}
\hspace{0.2cm}
	\begin{subfigure}{0.45\linewidth}
		\centering
		\includegraphics[width=\textwidth]{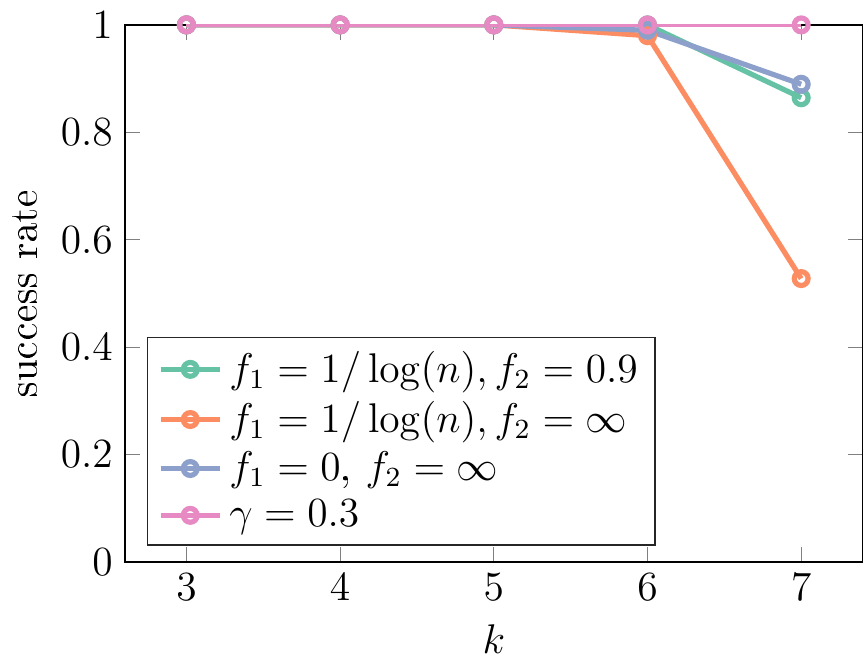}
		\caption{Gowalla}
		\label{fig:succgowalla}
	\end{subfigure}

\begin{subfigure}{0.45\linewidth}
	\centering
	\includegraphics[width=\textwidth]{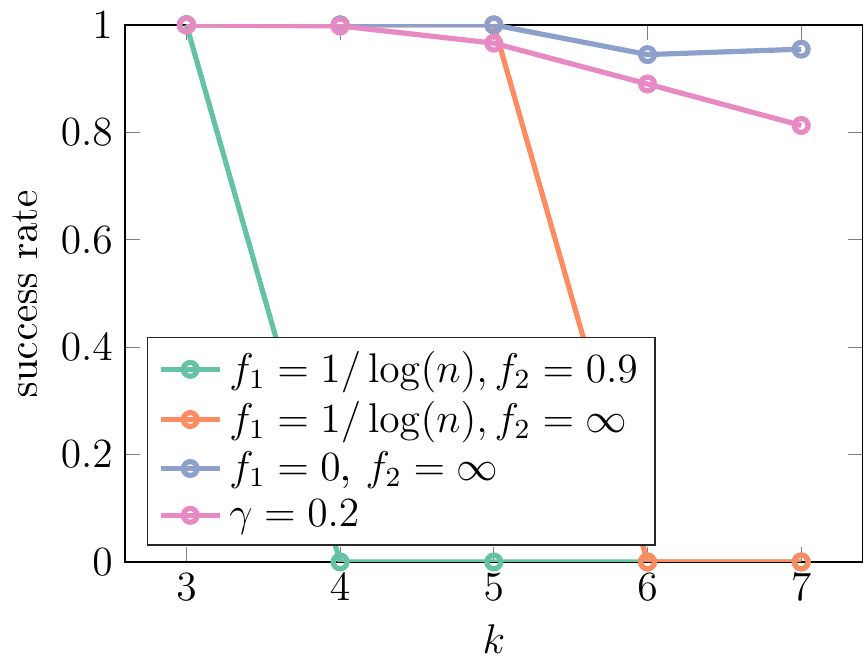}
	\caption{Baidu}
	\label{fig:succbaidu}
\end{subfigure}
\hspace{0.2cm}
\begin{subfigure}{0.45\linewidth}
	\centering
	\includegraphics[width=\textwidth]{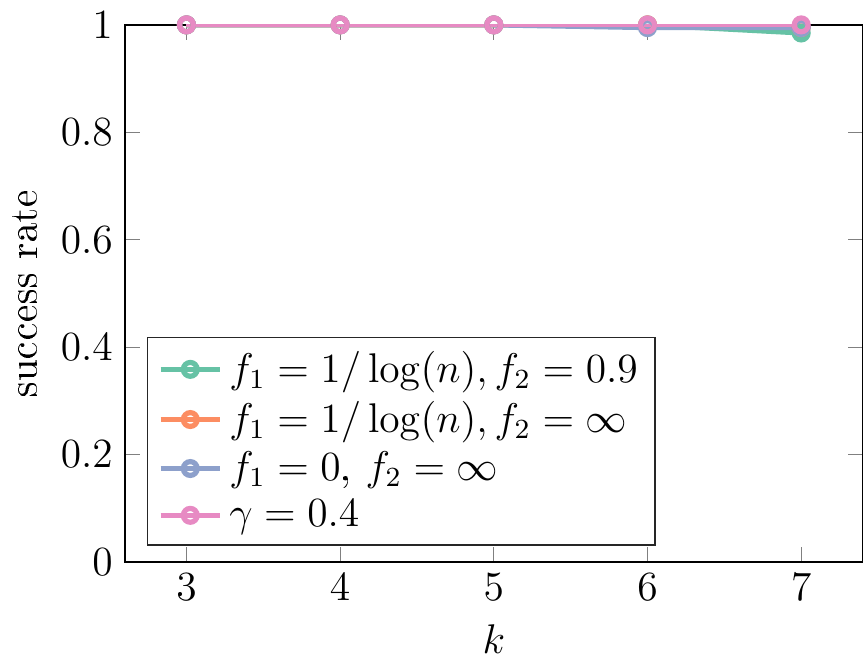}
	\caption{AS-Skitter}
	\label{fig:succasskit}
\end{subfigure}
	\caption{The fraction of times Algorithm~\ref{alg:motif2} succeeds to find a cycle on four large network data sets for detecting cycles of length $k$, using $s=10000$. The pink line uses Algorithm~\ref{alg:motif2} on vertices of degrees in $I_n=[(\mu n)^\gamma/\log(n),(\mu n)^\gamma]$. The values are averaged over 500 runs of Algorithm~\ref{alg:motif2}.}
	\label{fig:succdata}
	\end{figure}

\begin{figure}[htb]
	\centering
	\begin{subfigure}{0.45\linewidth}
		\centering
		\includegraphics[width=\textwidth]{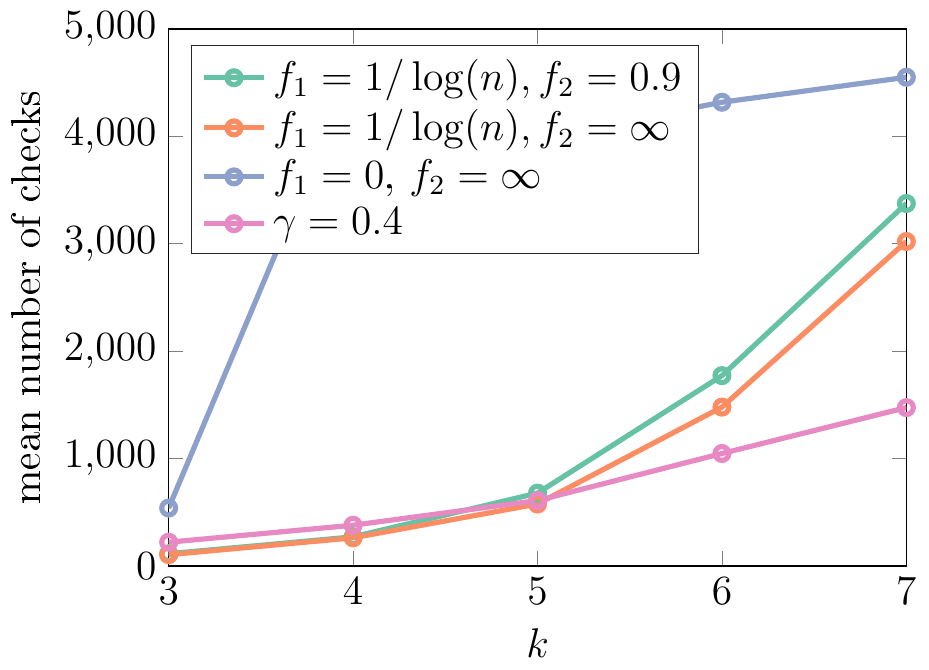}
		\caption{Wikipedia}
		\label{fig:meanwiki}
	\end{subfigure}
	\hspace{0.2cm}
	\begin{subfigure}{0.45\linewidth}
		\centering
		\includegraphics[width=\textwidth]{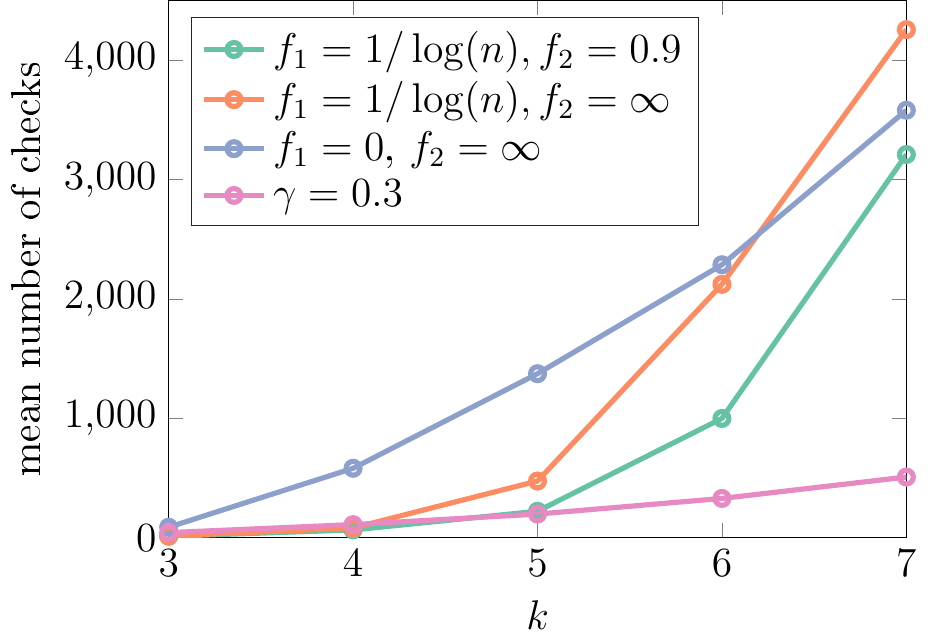}
		\caption{Gowalla}
		\label{fig:meangowalla}
	\end{subfigure}
	
	\begin{subfigure}{0.45\linewidth}
		\centering
		\includegraphics[width=\textwidth]{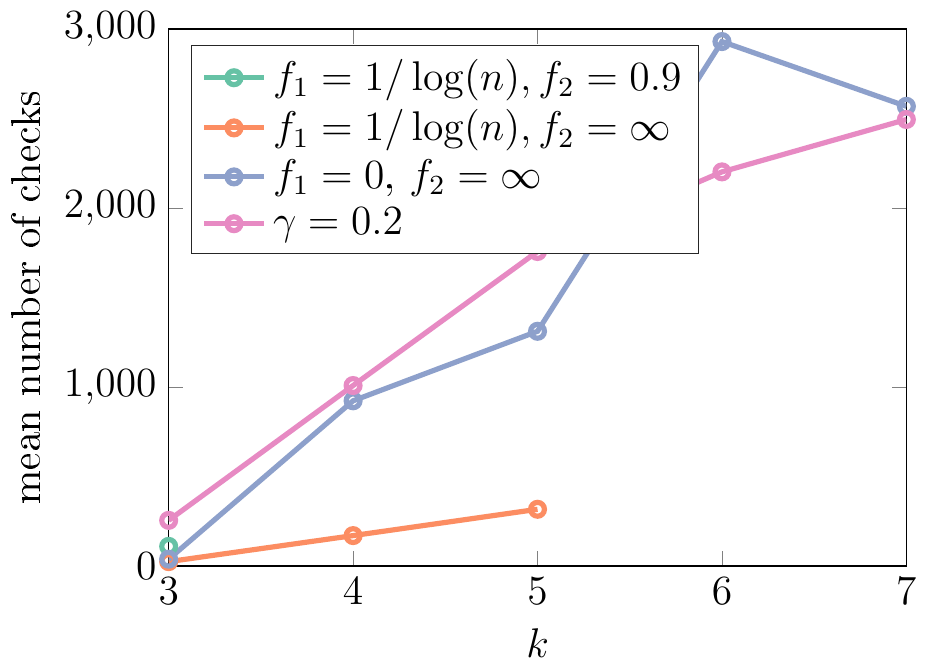}
		\caption{Baidu}
		\label{fig:meanbaidu}
	\end{subfigure}
	\hspace{0.2cm}
	\begin{subfigure}{0.45\linewidth}
		\centering
		\includegraphics[width=\textwidth]{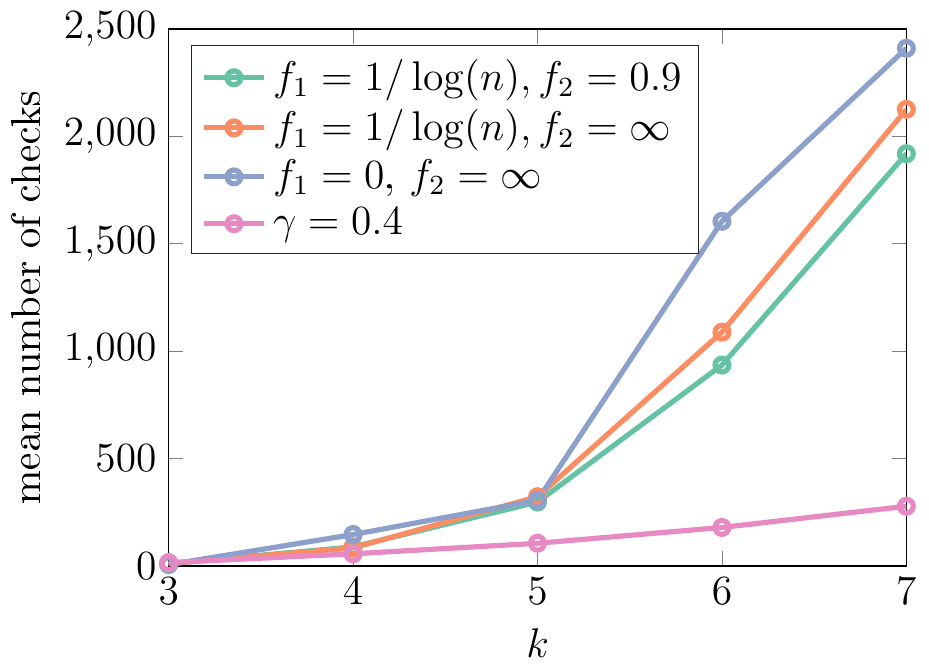}
		\caption{AS-Skitter}
		\label{fig:meanasskit}
	\end{subfigure}
	\caption{The number of times step 12 of Algorithm~\ref{alg:motif2} is invoked when the algorithm does not fail on four large network data sets for detecting cycles of length $k$, using $s=10000$. The pink line uses Algorithm~\ref{alg:motif2} on vertices of degrees in $I_n=[(\mu n)^\gamma/\log(n),(\mu n)^\gamma]$. The values are averaged over 500 runs of Algorithm~\ref{alg:motif2}.}
	\label{fig:meandata}
\end{figure}

\section{Conclusion}
We presented an algorithm which solves the induced subgraph problem on inhomogeneous random graphs with infinite variance power-law degrees in time $O(n\me^{k^4})$ with high probability as $n$ grows large.
This algorithm is based on the observation that for fixed $k$, any subgraph is present on $k$ vertices with degrees slightly smaller than $\sqrt{\mu n}$ with positive probability. Therefore, the algorithm first selects vertices with those degrees, and then uses a random search method to look for the induced subgraph on those vertices. 

We show that this algorithm performs well on simulations of inhomogeneous random graphs. Its performance on real-world data sets varies for different data sets. This indicates that the degrees that contain the most induced subgraphs of size $k$ in real-world networks may not be close to $\sqrt{n}$. We then show that on these data sets, it may be more efficient to find induced subgraphs on degrees proportional to $n^{\gamma}$ for some other value of $\gamma$. The value of $\gamma$ may be different for different networks.

Our algorithm exploits that induced subgraphs are likely formed among $\sqrt{\mu n}$-degree vertices. However, certain subgraphs may occur more frequently on vertices of other degrees~\cite{hofstad2017d}. For example, star-shaped subgraphs on $k$ vertices appear more often on one vertex with degree much higher than $\sqrt{\mu n}$ corresponding to the middle vertex of the star, and $k-1$ lower-degree vertices corresponding to the leafs of the star~\cite{hofstad2017d}. An interesting open question is whether there exist better degree-selection steps for specific subgraphs than the one used in Algorithms~\ref{alg:motif} and~\ref{alg:motif2}.

\paragraph{Acknowledgements.}
The work of JvL and CS was supported by NWO TOP grant 613.001.451. The work of JvL was further supported by the NWO Gravitation Networks grant 024.002.003, an NWO TOP-GO grant and by an ERC Starting Grant.

	\bibliographystyle{splncs_srt}
	\bibliography{references}
	
\end{document}